\newtheorem{construction}{Construction}
\algrenewcommand\algorithmicrequire{\textbf{Input:}}
\algrenewcommand\algorithmicensure{\textbf{Output:}}
\long\def\comment#1{}
\newtheorem{example}{Example}
\newtheorem{theorem}{Theorem}
\newtheorem{lemma}{Lemma}
\newtheorem{proposition}{Proposition}
\newfont{\bbb}{msbm10 scaled 700}
\newfont{\bb}{msbm10 scaled 1100}
\newcommand{\be}{\begin{equation}}
\newcommand{\ee}{\end{equation}}
\newcommand{\bea}{\begin{eqnarray}}
\newcommand{\eea}{\end{eqnarray}}
\begin{document}
\title{Multiaccess Coded Caching with Heterogeneous Retrieval Costs}

\author{
    \IEEEauthorblockN{ Wenbo Huang\IEEEauthorrefmark{1}, Minquan Cheng\IEEEauthorrefmark{2}, Kai Wan\IEEEauthorrefmark{1}, Xiaojun~Li\IEEEauthorrefmark{2}, Robert~Caiming~Qiu\IEEEauthorrefmark{1}, and Giuseppe~Caire\IEEEauthorrefmark{3}}
	\IEEEauthorblockA{\IEEEauthorrefmark{1}Huazhong University of Science and Technology, 430074  Wuhan, China,  \{eric\_huang,kai\_wan,caiming\}@hust.edu.cn} 
    \IEEEauthorblockA{\IEEEauthorrefmark{2}Guangxi Normal University,
541004 Guilin, China, chengqinshi@hotmail.com, xiaojun@hotmail.com}
\IEEEauthorblockA{\IEEEauthorrefmark{3} Technische Universit\"at Berlin, 10587 Berlin, Germany,   caire@tu-berlin.de}
}

\maketitle

\begin{abstract}
The multiaccess coded caching (MACC) system, as formulated by Hachem {\it et al.}, consists of a central server with a library of $N$ files, connected to $K$ cache-less users via an error-free shared link, and $K$ cache nodes, each equipped with cache memory of size $M$ files. Each user can access $L$ neighboring cache nodes under a cyclic wrap-around topology. Most existing studies operate under the strong assumption that users can retrieve content from their connected cache nodes at no communication cost. In practice, each user retrieves content from its $L$ different connected cache nodes at varying costs. Additionally, the server also incurs certain costs to transmit the content to the users. In this paper, we focus on  a cost-aware MACC system and aim to minimize the total system cost, which includes cache-access costs and broadcast costs. 
Firstly, we propose a novel coded caching framework based on superposition coding, where the MACC schemes of Cheng \textit{et al.} are layered. Then, a cost-aware optimization problem is derived that optimizes cache placement and minimizes system cost. By identifying a sparsity property of the optimal solution, we propose a structure-aware algorithm with reduced complexity. Simulation results demonstrate that our proposed scheme consistently outperforms the scheme of Cheng {\it et al.} in scenarios with heterogeneous retrieval costs.

\end{abstract}

\begin{IEEEkeywords}
 Coded caching, multiaccess, communication cost, superposition coding, optimization.
\end{IEEEkeywords}

\section{Introduction}
\label{sec: intro}

Coded caching is a promising technique to effectively reduce peak traffic by using local caches and the multicast gains generated by these local caches. Maddah-Ali and Niesen in \cite{Maddah-Ali_centralized} proposed the first well-known coded caching framework, where a central server (owning $N$ files) is connected to $K$ users via error-free shared links. Each user can cache up to $M$ files. The objective is to minimize the normalized amount of transmission for the worst-case over all possible demands. 

Edge caching, especially in the form of multiaccess caching, has recently garnered significant attention due to its ability to improve both spatial and spectral efficiency~\cite{Golrezaei2012Femto,Ji2016Multiaccess}. The authors in \cite{Multi-level} proposed the first multiaccess coded caching (MACC) system with parameters $(K, L, M, N)$ which consists of a server with $N$ files, $K$ cache nodes and $K$ cache-less users, each of which is connected to $L$ neighboring caches in a cyclic wrap-around manner, each cache node can store at most $M \leq \frac{N}{L}$ files.\footnote{\label{foot: 1} If $M > \tfrac{N}{L}$, each user can directly retrieve all requested files from its connected cache nodes, and the problem becomes trivial.} Each user can decode the coded broadcast messages, using all available content in the caches (at no load cost) that the user is connected to. The objective is to minimize the worst-case transmission load. 
Subsequent works have proposed various MACC schemes to further improve the transmission load or reduce the subpacketization level, e.g.,~\cite{reddy2020rate,serbetci2019multi,MuralidharMAC,MAC}. For other multiaccess caching settings, coded placement strategies have been investigated in~\cite{coded_placement1,coded_placement2}. Moreover, different cache-access topologies have been studied, including cross-resolvable designs~\cite{Cross_resolvable} and combinatorial designs~\cite{Multiaccess_topologies}, among others.

As discussed above, most existing studies rely on an idealized assumption that the cost incurred when a user retrieves content from its connected cache nodes is negligible. However, this assumption rarely holds in practical wireless or distributed systems~\cite{yao2019mobile,li2018survey}. In edge or fog computing scenarios~\cite{sengupta2017fog,azimi2018online}, user-to-cache communication may involve wireless hops, backhaul links, or non-negligible access delays, all of which can contribute significantly to the overall system cost.

In this paper, we consider the MACC with heterogeneous retrieval costs. We study a $(K,\boldsymbol{\mu} = (\mu_1, \mu_2, \ldots, \mu_L),\rho, M, N)$ MACC system, where each user retrieves one file from its $l$-th cyclic right closed cache node at a cost $\mu_l$ (referred to as cache-access cost) for each integer $1\leq l\leq L$, and the server broadcasts one file at a cost $\rho$ (referred to as broadcast cost). Our objective is to minimize the total communication cost under worst-case demands, defined as the sum of the total server's broadcast cost and the total users' cache-access costs. The system model is presented in Fig.~\ref{multiaccess-system}. 
When $\boldsymbol{\mu} =\mathbf{0}$ and $\rho=1$, the MACC system reduces to the classical MACC system in prior works~\cite{MAC,Multi-level,reddy2020rate,serbetci2019multi,MuralidharMAC}, and our objective reduces to the transmission load, which accounts only for the  broadcast cost. Thus, our MACC system can be viewed as a generalization of the original MACC system.

We establish a heterogeneous retrieval-cost framework for the MACC system. The main contributions of this paper are summarized as follows. 
\begin{itemize}
    \item  By applying a superposition strategy into the MACC scheme in~\cite{MAC}, we develop a new scheme that layers multiple coded caching schemes according to different access levels. 
     The proposed scheme preserves the key advantage of the MACC scheme in~\cite{MAC} in achieving the maximum local caching gain,\footnote{The cached contents stored at any $L$ neighboring cache nodes are mutually distinct, such that each user can retrieve up to $LM$ files from its connected cache nodes.} and further provides a flexible framework that adapts to arbitrary heterogeneous cost settings by  tuning the superposition weights.
    \item To minimize the total communication cost, we formulate an optimization problem to determine the superposition weights corresponding to different access levels. We prove that an optimal solution to the proposed problem exhibits a sparsity property, namely that at most two superposition weights are nonzero. Exploiting this structural insight, we develop a structure-aware algorithm with reduced computational complexity.
\end{itemize}


In this paper, we will use the following notations. Bold capital letter, bold lower case letter, and curlicue font will be used to denote array, vector, and set, respectively. We assume that all the sets are ordered increasingly, and $|\cdot|$ is used to represent the cardinality of a set or
the length of a vector. For any positive integers $a$, $b$, $t$, $q$ with $a<b$ and $t\leq b $, and any nonnegative set $\mathcal{V}$, let $[a:b]=\{a,a+1,\ldots,b\}$, especially $[1:b]$ be shorten by $[b]$; if $a$ is not divisible by $q$, $\langle a\rangle_{q}$ denotes the least non-negative residue of $a$ modulo $q$. Otherwise,  $\langle a\rangle_{q}:=q$. $\mathcal{V}[h]$ represents the $h^{th}$ smallest element of $\mathcal{V}$, where $h\in[|\mathcal{V}|]$.

\section{Problem Statement}
\label{sec: problem}

\subsection{Multiaccess Caching Model with Heterogeneous Retrieval Costs}
\label{sec: system}

In a $(K,\boldsymbol{\mu} = (\mu_1, \mu_2, \ldots, \mu_L),\rho, M, N)$ MACC (MACC) system, users want to retrieve files from the server and cache nodes. One server contains $N$ equal-length files    in the library $\mathcal{W} = \{W_1, W_2, \ldots, W_N\} $, and is connected to $K$ cache-less users $\{U_1, U_2, \ldots, U_K\}$ through a broadcast shared-link. There are also 
  $K$ cache nodes $\{C_1, C_2, \ldots, C_K\}$, each of which contains $M$ files, subject to $M \leq \tfrac{N}{L}$.  User $U_k$ where $k \in [K]$  is connected to $L$ neighboring cache nodes in a cyclic wrap-around manner, i.e., $\{C_k, C_{\langle k+1\rangle_K}, \ldots, C_{\langle k+L-1\rangle_K}\}$. 
  Each user can retrieve the cached content from its connected cache-node with some cost.
  Retrieving one file from the $l^{\text{th}}$ connected cache node $C_{\langle k+l-1\rangle_K}$ incurs a unit cost of $\mu_l$, for $l\in[L]$, while receiving one file directly from the server incurs a unit cost of $\rho$. Thus, the vector $\boldsymbol{\mu}=(\mu_1,\mu_2,\ldots,\mu_L)$ characterizes the cache-access costs, and $\rho$ denotes the server broadcast cost. The system model is presented in Fig.~\ref{multiaccess-system}.
\begin{figure}[htbp]
    \centering
        \includegraphics[width=\linewidth]{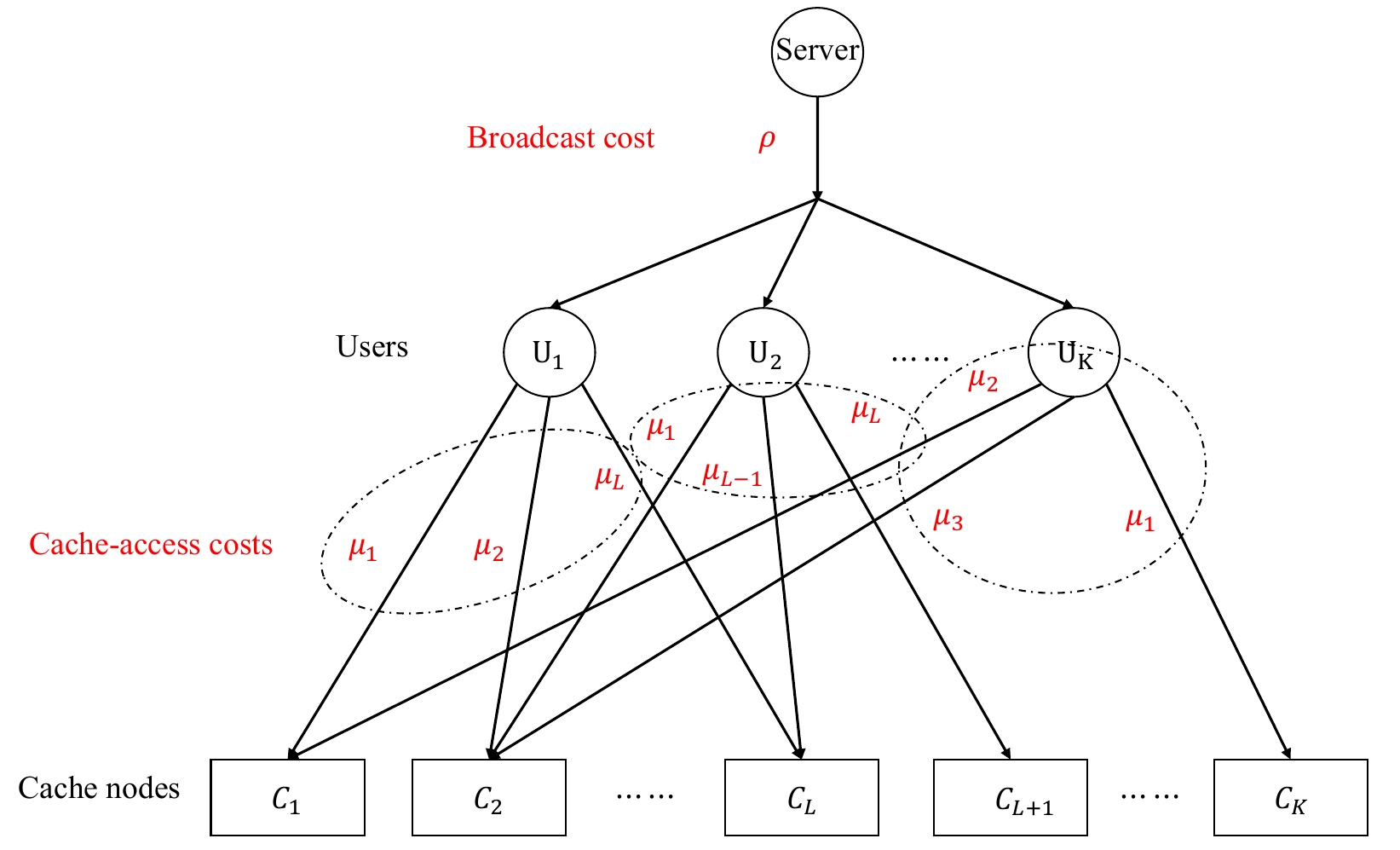}
    \caption{The $(K,L,M,N,\rho,v)$    multiaccess  coded caching system.}   \label{multiaccess-system}
\end{figure}

The $(K,\boldsymbol{\mu},\rho,M,N)$ MACC model consists of two phases:

 {\bf Placement phase:} Each file is divided into $F$ packets of equal size, and each cache node caches $MF$ packets of files. Each user $U_k$ can retrieve the content stored at its connected cache nodes.

 {\bf Delivery phase:} Each user $U_k$ randomly requests one file $W_{d_k}$. Based on the request vector $\mathbf{d} = \left(d_1, d_2, \ldots, d_K\right)$ and the retrieved content of each user, the server broadcasts a message containing $S_{\mathbf{d}}$ packets to all users satisfying each user's request.

The objective is to minimize the worst-case communication cost incurred in the system, which includes both the cache-access cost and the server broadcast cost.
\begin{align}
\label{eq: cost-def}
R = \max\limits_{\mathbf{d} \in [N]^{K}} \frac{\sum_{l=1}^{L}\mu_l\Delta_{\mathbf{d},l} + \rho S_{\mathbf{d}}}{F},
\end{align}
where $\Delta_{\mathbf{d},l}$ represents the total number of the retrieval packets from the $l^{\text{th}}$ cache node.

Most existing works on MACC~\cite{MAC, MuralidharMAC, brunero2022fundamental} neglect the cache-access cost, i.e., by setting $\boldsymbol{\mu}=\mathbf{0}$ and $\rho=1$. In this case, the system degenerates into a $(K,L,M,N)$ MACC model in \cite{Multi-level}, and the communication cost reduces to the worst-case load
\begin{align}
    R = \max\limits_{\mathbf{d} \in [N]^{K}} \frac{ S_{\mathbf{d}}}{F}.
\end{align}

\subsection{MACC Scheme}
\label{sec: MACC}

To the best of our knowledge, all existing MACC schemes under uncoded placement and one-shot delivery strategies can be represented by the following three types of arrays. 
\begin{itemize}
    \item An $F \times K$ node-placement array $\mathbf{C}$ consists of star and null, where $F$ and $K$ represent the subpacketization of each file and the number of cache nodes,  respectively. The entry located at position $(j,k)$ is star if and only if the $k^{\text{th}}$ cache node  caches  the $j^{\text{th}}$ packet of each $W_n$ where $n\in [N]$.
    \item An $F \times K$ user-retrieve array $\mathbf{U}$ consists of star and null, where $F$ and $K$ represent the subpacketization of each file and the number of users, respectively. The entry at position $(j,k)$ is star if and only if the $k^{\text{th}}$ user can retrieve the $j^{\text{th}}$ packet  of each $W_n$ where $n\in [N]$.
    \item An $F \times K$ user-delivery array $\mathbf{Q}$ consists of $\{*\}\cup[S]$, where $F$, $K$,  and the stars in $\mathbf{Q}$ have the same meaning as that of $\mathbf{U}$. Each integer represents a broadcast message, and $S$ represents the total number of broadcast messages transmitted in the delivery phase. Specifically, for any demand vector $\mathbf{d}$, the server transmits the following coded message
\begin{align}
\label{eq-delivery-strategy}  X_s = \oplus_{\mathbf{Q}(j,k)=s} W_{d_k,j},
\end{align} for each time slot $s\in[S]$.  
\end{itemize}

The authors in \cite{yan2017placement} showed that the transmission is a one-shot strategy if the user-delivery array $\mathbf{Q}$ satisfies the following conditions. 
\begin{lemma}[\cite{yan2017placement}]
\label{lem-sufficient-Q}   
The transmission generated by \eqref{eq-delivery-strategy} is of the one-shot type if, for every two different entries $\mathbf{Q}(j_1,k_1)$ and $\mathbf{Q}(j_2,k_2)$ in $\mathbf{Q}$, when $\mathbf{Q}(j_1,k_1) = \mathbf{Q}(j_2,k_2) = s$ is an integer, the following conditions hold.
\begin{enumerate}
\item [C1.] $j_1 \neq j_2, k_1 \neq k_2$, i.e., they lie in distinct rows and distinct columns;
\item [C2.] $\mathbf{Q}(j_1,k_2) = \mathbf{Q}(j_2,k_1) = *$, i.e., the corresponding $2 \times 2$ subarray formed by rows $j_1, j_2$ and columns $k_1,k_2$ is either 
        $$ \left(\begin{array}{cc}
         s & *  \\
         * & s 
 \end{array}\right) ~\text{or}~ \left(\begin{array}{cc}
         * & s  \\
         s & * 
 \end{array}\right). $$
 \end{enumerate}	\hfill $\square$ 
\end{lemma}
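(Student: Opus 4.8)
The plan is to verify the defining property of a one-shot scheme directly: I would show that, for every integer label $s$, each user participating in the broadcast $X_s$ can recover its single demanded packet from that one transmission alone, after cancelling all interfering terms using only content it can retrieve locally (i.e., content marked by a star in its column of $\mathbf{Q}$). This is precisely what ``one-shot'' requires, so establishing it for an arbitrary labelled entry suffices.

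First I would fix an arbitrary label $s\in[S]$ and an arbitrary entry $\mathbf{Q}(j,k)=s$, which signals that user $U_k$ needs the packet $W_{d_k,j}$ and that this packet is mixed into $X_s$. Using the delivery rule \eqref{eq-delivery-strategy}, I would split off the target term and write
$$ X_s \;=\; W_{d_k,j}\ \oplus \bigoplus_{\substack{(j',k')\neq(j,k)\\ \mathbf{Q}(j',k')=s}} W_{d_{k'},j'}, $$
so that decoding $W_{d_k,j}$ reduces to showing that user $U_k$ can reconstruct every interference term $W_{d_{k'},j'}$ in this sum. For each such term, C1 guarantees $j'\neq j$ and $k'\neq k$, placing the two equal entries on opposite corners of the $2\times2$ subarray on rows $j,j'$ and columns $k,k'$; C2 then forces the remaining two corners to be stars, and in particular $\mathbf{Q}(j',k)=*$. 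Since a star in $\mathbf{Q}$ carries the same meaning as in $\mathbf{U}$, this says user $U_k$ can retrieve the $j'$-th packet of every library file from its connected cache nodes, and hence holds $W_{d_{k'},j'}$ locally. XOR-ing all these retrievable terms out of $X_s$ isolates $W_{d_k,j}$, and because $(s,j,k)$ was arbitrary, every demanded-but-not-locally-available packet is decoded from a single broadcast message, which is exactly the one-shot property.

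I expect the only delicate point to be making explicit why \emph{both} halves of C1 are needed for the cancellation to be well-posed, rather than the cancellation step itself. Concretely, if two equal labels could share a column, a user would have two wanted packets coupled inside one XOR and could not peel off both; if they could share a row, an interference term's cross position would itself carry the label $s$ instead of a star, so C2 could not supply the star needed to cancel it. Spelling out that C1 excludes exactly these two degenerate patterns, so that the star guaranteed by C2 is always available, will be the main (though light) obstacle; the remaining manipulation is mechanical.
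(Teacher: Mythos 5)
Your proof is correct and takes essentially the same route as the paper: both arguments use C1 to place all occurrences of a label $s$ in distinct rows and columns, then apply C2 to force the cross positions such as $\mathbf{Q}(j',k)$ to be stars, so each participating user can retrieve every interfering packet from its caches and peel it off the single broadcast $X_s$. The only difference is presentational --- the paper aggregates all $l$ occurrences of $s$ into an $l \times l$ subarray that is diagonal (with stars off-diagonal) up to row/column permutation, whereas you cancel the interference terms pairwise --- but the underlying mechanism is identical.
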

Assume that in $\mathbf{Q}$ there are $l$ occurrences of the same integer $s$, located in $\{(j_u,k_u)\}^{l}_{u=1}$, where $1 \le j_u \le F$ and $1 \le k_u \le K$. By C1, all $j_u$ and $k_u$ are distinct, so the subarray formed by rows $\{j_1,\dots,j_l\}$ and columns $\{k_1,\dots,k_l\}$ is $l \times l$. By repeatedly applying C2 to any pair $(u,v)$ with $u\neq v$, we have $\mathbf{Q}(j_u,k_v)=*$. Hence, up to simultaneous row/column permutations, this subarray has the form
\begin{eqnarray}\label{Eqn_Matrix_1}
    \left(\begin{array}{ccc}
      s &  \cdots & *\\
            \vdots  &\ddots & \vdots\\
      * &  \cdots & s
    \end{array}\right).
\end{eqnarray}

According to the delivery rule \eqref{eq-delivery-strategy}, at time slot $s$ (with $1\le s \le S$) the server transmits $\oplus_{u\in[l]} W_{d_{k_u},\,j_u}$. 
From \eqref{Eqn_Matrix_1}, in column $k_v$ all entries within the selected $l$ rows are $\ast$ except at row $j_v$. By the placement associated with $\ast$, user $k_v$ can retrieve all packets $W_{d_{k_u},\,j_u}$ for $u\neq v$. Therefore, it subtracts those from the broadcast and recovers its desired packet $W_{d_{k_v},\,j_v}$. Consequently, all users decode successfully for any demand vector.

The authors in~\cite{MAC} proposed the following construction, which has the maximum local caching gain and a coded caching gain as the MN scheme.
\begin{construction}[MACC arrays, \cite{MAC}]
\label{constr-3-arrays}For any positive integers $K'$, $t$ and $L$, we can construct the node-placement array, user-retrieve array, and user-delivery array, denoted by 
\begin{align*}
\mathbf{C}&=(\mathbf{C}((\mathcal{T},g),k))_{\mathcal{T}\in{[K']\choose t},g,k\in[K]},\\
\mathbf{U}&=(\mathbf{U}((\mathcal{T},g),k))_{\mathcal{T}\in{[K']\choose t},g,k\in[K]},\\
\mathbf{Q}&=(\mathbf{Q}((\mathcal{T},g),k))_{\mathcal{T}\in{[K']\choose t},g,k\in[K]},
\end{align*} which can be used to generate a $(K,L,M,N)$
 MACC scheme where $K = K' +t(L-1)$, with the memory ratio $\frac{M}{N}=\frac{t}{K}$, subpacketization $F=\binom{K'}{t}K$ and transmission load $R=\frac{K-tL}{t+1}$ as follows.  For any $\mathcal{T}\in{[K']\choose t}$, $g,k\in[K]$, the entries  
\begin{align}
\mathbf{C}((\mathcal{T},g),k)& = 
\begin{cases}
   * & \text{if}\ k \in \mathcal{C}_{\mathcal{T},g}\\
 null & \text{otherwise},
    \end{cases}\label{eq-cach-node-C}\\
  \mathbf{U}((\mathcal{T},g),k)& = 
    \begin{cases}
        * & \text{if}\  k \in \mathcal{U}_{\mathcal{T},g} \\
        null & \text{otherwise},
    \end{cases}  \label{eq-cach-user-U}  \\
\mathbf{Q}((\mathcal{T},g),k)&= 
    \begin{cases}
        * & \text{if}\  k \in \mathcal{U}_{\mathcal{T},g} \\
        (\mathcal{T} \cup \{\psi_{\mathcal{T},g}(k)\},g) & \text{otherwise},
    \end{cases} \label{eq-delivery-user-Q} 
\end{align}
where
\begin{align} 
\mathcal{C}_{\mathcal{T},g}&= \{ \langle \mathcal{T}[h] + h(L-1) +(g-1) \rangle_{K}  : h \in [t]   \}\label{eq-row-cache}  \\
\mathcal{U}_{\mathcal{T},g}&= \left\{ \langle \mathcal{T}[h] + (h-1)(L-1) + (g-1)\rangle_{K},  \right. \nonumber\\  & \ \ \ \ \ \langle \mathcal{T}[h] + (h-1)(L-1) + (g-1) + 1\rangle_{K},  \nonumber\\
&\ \ \ \  \left.  \ldots, \langle\mathcal{T}[h] + h(L-1) +(g-1)\rangle_{K}: h \in [t]\right\}, \label{eq-row-user}
\end{align}
and $\psi_{\mathcal{T},g}(\cdot)$ is a one-to-one mapping function which maps the integer $k$ into the ${\langle n + (g-1) \rangle_{K'}}$-th entry of $[K'] \setminus \mathcal{T}$, where $k$ is the $n^{\text{th}}$ entry of $[K]\setminus \mathcal{U}_{\mathcal{T},g}$.
\hfill $\square$
\end{construction}

To illustrate the Construction~\ref{constr-3-arrays}, we provide an example in the Appendix~\ref{sec: Cons1}. By Construction~\ref{constr-3-arrays}, the authors in \cite{MAC} obtained the following result.
\begin{lemma}[\cite{MAC}]
    \label{thm: MACC}
    For any positive integers $(K',t,L)$, there exists a $(K,L,M,N)$ MACC scheme where $K=K'+t(L-1)$, $\frac{M}{N} = \frac{t}{K}$ with the worst-case load $R = \frac{K-tL}{t+1}$.
    
    \hfill $\square$
\end{lemma}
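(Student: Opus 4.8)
The plan is to prove the lemma by verifying that the three arrays $\mathbf{C}$, $\mathbf{U}$, $\mathbf{Q}$ of Construction~\ref{constr-3-arrays} define a valid $(K,L,M,N)$ MACC scheme with the claimed subpacketization $F=\binom{K'}{t}K$, memory ratio $M/N=t/K$, and worst-case load $R=(K-tL)/(t+1)$. There are three things to establish: (i) the parameter counts $F$ and $M$; (ii) that $\mathbf{U}$ is consistent with $\mathbf{C}$ under the cyclic wrap-around topology, i.e. a user is marked as able to retrieve a packet exactly when one of its $L$ connected nodes caches it; and (iii) that the delivery array $\mathbf{Q}$ yields a one-shot decodable transmission (via Lemma~\ref{lem-sufficient-Q}) in which every broadcast message is shared by exactly $t+1$ users, from which the load follows.

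First I would handle the counting. The rows are indexed by the pairs $(\mathcal{T},g)$ with $\mathcal{T}\in\binom{[K']}{t}$ and $g\in[K]$, so $F=\binom{K'}{t}K$ is immediate. For the memory, each row of $\mathbf{C}$ contributes exactly $|\mathcal{C}_{\mathcal{T},g}|=t$ stars, since \eqref{eq-row-cache} lists one index per $h\in[t]$ and these are distinct modulo $K$ (using $K=K'+t(L-1)$ to rule out wrap-around collisions). Because $\mathcal{C}_{\mathcal{T},g}$ is obtained from $\mathcal{C}_{\mathcal{T},1}$ by the cyclic shift $g-1$, as $g$ ranges over $[K]$ each of the $t$ indices sweeps all $K$ columns once; hence each column carries $\binom{K'}{t}t$ stars, giving $\binom{K'}{t}t$ cached packets per node and thus $M=tN/K$.

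Next I would establish the placement--retrieval consistency. Comparing \eqref{eq-row-cache} and \eqref{eq-row-user}, the set $\mathcal{U}_{\mathcal{T},g}$ is precisely the union, over $h\in[t]$, of the length-$L$ windows ending at the elements of $\mathcal{C}_{\mathcal{T},g}$, namely $\{\langle\mathcal{T}[h]+(h-1)(L-1)+(g-1)\rangle_K,\ldots,\langle\mathcal{T}[h]+h(L-1)+(g-1)\rangle_K\}$. Since $U_k$ accesses exactly $\{C_k,\ldots,C_{\langle k+L-1\rangle_K}\}$, the membership $k\in\mathcal{U}_{\mathcal{T},g}$ is equivalent to some connected node of $U_k$ lying in $\mathcal{C}_{\mathcal{T},g}$, which is the desired consistency. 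The same window analysis shows these $t$ windows are disjoint (consecutive windows begin at $\mathcal{T}[h]$ and $\mathcal{T}[h+1]$ with $\mathcal{T}[h+1]>\mathcal{T}[h]$) and fit within one length-$K$ cycle, so $|\mathcal{U}_{\mathcal{T},g}|=tL$ and each row of $\mathbf{Q}$ carries $K-tL$ integer entries.

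Finally I would verify decodability and compute the load. A non-star entry at $((\mathcal{T},g),k)$ carries the label $(\mathcal{T}\cup\{\psi_{\mathcal{T},g}(k)\},g)$ with $\mathcal{T}\cup\{\psi_{\mathcal{T},g}(k)\}\in\binom{[K']}{t+1}$; conversely, a fixed label $(\mathcal{S},g)$ with $\mathcal{S}\in\binom{[K']}{t+1}$ can arise only from the $t+1$ choices $\mathcal{T}=\mathcal{S}\setminus\{j\}$, $j\in\mathcal{S}$, each contributing, through the bijection $\psi_{\mathcal{T},g}:[K]\setminus\mathcal{U}_{\mathcal{T},g}\to[K']\setminus\mathcal{T}$, a single entry in a distinct row. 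I would then check the two conditions of Lemma~\ref{lem-sufficient-Q}: that these $t+1$ entries lie in distinct columns, and that for any two of them, $((\mathcal{T}_1,g),k_1)$ and $((\mathcal{T}_2,g),k_2)$, the crossing positions are stars, i.e. $k_2\in\mathcal{U}_{\mathcal{T}_1,g}$ and $k_1\in\mathcal{U}_{\mathcal{T}_2,g}$. This last point, obtained by unwinding the definition of $\psi_{\mathcal{T},g}$ together with the modular arithmetic of \eqref{eq-row-user}, is the main obstacle of the proof, being the analogue of the Maddah--Ali--Niesen interference-cancellation property in the shifted cyclic setting. Once it holds, every label occurs exactly $t+1$ times, so the number of distinct messages is $S=\binom{K'}{t}K(K-tL)/(t+1)$, whence $R=S/F=(K-tL)/(t+1)$, completing the proof.
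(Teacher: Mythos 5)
Your proposal is correct and takes essentially the same route as the paper: verify the parameters of Construction~\ref{constr-3-arrays}, invoke the conditions C1--C2 of Lemma~\ref{lem-sufficient-Q} for one-shot decodability, and count the distinct integer labels (each occurring exactly $t+1$ times via the bijectivity of $\psi_{\mathcal{T},g}$) to obtain $R=\frac{K-tL}{t+1}$, exactly as the paper does through $S=K\binom{K'}{t+1}$ in Appendix~\ref{sec: optimization problem establish}. Note that, like the paper---which defers the general verification of C1--C2 for Construction~\ref{constr-3-arrays} to \cite{MAC} and only illustrates it in the example of Appendix~\ref{sec: Cons1}---you flag the modular-arithmetic check of C2 as the main remaining step rather than executing it, so the two treatments sit at the same level of completeness.
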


\section{Main Results}
\label{sec: Main}


Firstly, we propose a new coded caching framework based on superposition coding, where the MACC schemes of Cheng \textit{et al.}~\cite{MAC} are layered across different access levels. Then, a cost-aware optimization problem is derived that optimizes cache placement and minimizes system cost. To solve the problem, we develop a structure-aware algorithm by exploiting the sparsity property of the optimal solution.

\subsection{Superposition MACC scheme}
\label{sec: super}

\begin{theorem}\label{thm: com}
For any positive real vectors $\boldsymbol{\alpha} := (\alpha_1,\alpha_2, \ldots, \alpha_L)$ and $\boldsymbol{\gamma} := (\gamma_1,\gamma_2, \ldots, \gamma_L)$, the optimal communication cost of the $(K,\boldsymbol{\mu},\rho,M,N)$ superposition MACC scheme based on~\cite{MAC}  is characterized by the solution of the following optimization problem (where $M \leq \tfrac{N}{|\boldsymbol{\mu}|} = \tfrac{N}{L}$):
\begin{align}
\underset{\boldsymbol{\alpha,\gamma}}{\text{minimize }} \quad & R =   \sum_{l\in [L]}\alpha_l \left[ K\frac{1-l\gamma_l}{K\gamma_l+1}\rho + K\gamma_l \sum_{\ell=1}^{l} \mu_{\ell} \right. \nonumber \\
& \quad \quad \left. + K^2\gamma_l\frac{1-l\gamma_l}{K\gamma_l+1}(\mu_1+\mu_{l}) \right] \nonumber \\[1ex]
\text{subject to} \quad 
& 0 \leq \alpha_l \leq 1, \quad \forall l \in [L]; \label{eq: bound1} \\ 
& \tfrac{1}{K} \leq \gamma_l \leq \tfrac{1}{K}\Big\lfloor\tfrac{K}{l}\Big\rfloor, \quad \forall l \in [L]; \label{eq: bound2} 
\end{align}
\begin{align}
& \sum_{l\in [L]}\alpha_l = 1; \label{eq: con1} \\
& \sum_{l\in [L]}\alpha_l\gamma_l = \tfrac{M}{N}. \label{eq: con2}
\end{align} \label{eq: opt}  \hfill $\square$
\end{theorem}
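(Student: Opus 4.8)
The plan is to read this as an achievability-plus-accounting statement: the phrase ``optimal communication cost of the superposition MACC scheme'' means the best cost attainable \emph{within} the family of layered schemes built from Construction~\ref{constr-3-arrays}, so the whole task is (i) to construct an explicit layered scheme parametrized by $(\boldsymbol\alpha,\boldsymbol\gamma)$, (ii) to evaluate its worst-case cost \eqref{eq: cost-def} as an exact closed-form function of these parameters, and (iii) to identify the feasibility constraints, after which ``optimal'' is simply the minimum of this function over the feasible set. No converse beyond the chosen scheme family is required.

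First I would fix, for each access level $l\in[L]$, a level-$l$ sub-scheme obtained from Lemma~\ref{thm: MACC} by letting each user use only its $l$ nearest connected nodes, with integer cache parameter $t_l$; this sub-scheme has memory ratio $t_l/K$ and broadcast load $\tfrac{K-lt_l}{t_l+1}$. I would then superimpose the $L$ sub-schemes by splitting every file into $L$ disjoint layers, assigning a fraction $\alpha_l$ of each file to layer $l$ and encoding that layer with the level-$l$ sub-scheme; the server transmission is the concatenation of the $L$ layer-transmissions and each layer is decoded separately, so the constraint $\sum_l\alpha_l=1$ in \eqref{eq: con1} is exactly the statement that the layers partition each file. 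To promote the discrete ratio $t_l/K$ to the continuous variable $\gamma_l$, I would memory-share between the two consecutive integer sub-schemes bracketing $K\gamma_l$, so that layer $l$ behaves as a level-$l$ scheme with effective parameter $t_l=K\gamma_l$; this is what legitimizes $\tfrac{K(1-l\gamma_l)}{K\gamma_l+1}=\tfrac{K-lt_l}{t_l+1}$ for non-integer $K\gamma_l$ and what yields the box constraint \eqref{eq: bound2}, whose endpoints $\tfrac1K$ and $\tfrac1K\lfloor K/l\rfloor$ encode $t_l\ge1$ and the feasibility requirement $lt_l\le K$ (i.e.\ $M_l\le N/l$).

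With the scheme fixed, the cost \eqref{eq: cost-def} splits into three pieces per layer, weighted by $\alpha_l$ and normalized by the layer subpacketization. The broadcast piece is $\rho$ times the level-$l$ load, giving the first bracketed term $K\tfrac{1-l\gamma_l}{K\gamma_l+1}\rho$. The \emph{direct} cache-access piece counts each user reading a $\gamma_l$-fraction of its own demanded file from each of its $l$ nearest nodes, at per-user cost $\gamma_l(\mu_1+\cdots+\mu_l)$; summing over the $K$ users yields the second term $K\gamma_l\sum_{\ell=1}^{l}\mu_\ell$. The per-node memory budget then forces $\sum_l\alpha_l\gamma_l=M/N$, which is precisely \eqref{eq: con2}, while $\alpha_l\in[0,1]$ gives \eqref{eq: bound1}; minimizing the resulting weighted sum over this feasible set is the problem \eqref{eq: opt}.

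The hard part will be the third term, the \emph{decoding} cache-access cost $K^2\gamma_l\tfrac{1-l\gamma_l}{K\gamma_l+1}(\mu_1+\mu_l)$, which records the cached content each user must fetch to cancel the undesired packets in every coded transmission \eqref{eq-delivery-strategy} it participates in. A naive count (every served user in every XOR cancels $t_l$ packets) overcounts, because by the star structure of Lemma~\ref{lem-sufficient-Q} a given cached packet is fetched once and reused across many transmissions; the genuine obstacle is to count these \emph{distinct} fetches and, more delicately, to locate the relative position of the node supplying each one. I expect that exploiting the cyclic wrap-around placement \eqref{eq-row-cache}--\eqref{eq-row-user} of Construction~\ref{constr-3-arrays} will show the reused cancellation packets are drawn from exactly the first and the $l$-th relative positions --- which is what produces the combination $\mu_1+\mu_l$ rather than the full sum $\sum_{\ell}\mu_\ell$ --- and that the aggregate number of such fetches equals the level-$l$ load scaled by $K\gamma_l$. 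Verifying this localization carefully, and checking that the all-distinct demand is the worst case so that \eqref{eq: cost-def} is attained, is where the main effort lies; the remaining algebra of collecting the three terms into \eqref{eq: opt} is routine.
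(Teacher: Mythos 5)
Your construction and accounting framework is essentially the paper's own: you split each file into layers of size $\alpha_l$, run the level-$l$ scheme of Construction~\ref{constr-3-arrays} on layer $l$ with caching ratio $\gamma_l$, obtain the constraints \eqref{eq: bound1}--\eqref{eq: con2} from the file partition and the per-node memory budget, and derive the broadcast term $K\frac{1-l\gamma_l}{K\gamma_l+1}\rho$ and the direct-retrieval term $K\gamma_l\sum_{\ell=1}^{l}\mu_\ell$ exactly as the paper does. You also correctly single out the decoding-access term as the crux, and you even guess both the localization to the first and $l$-th connected nodes and the correct aggregate value.

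However, the mechanism you give for that third term is wrong, and this is a genuine gap. You claim the naive count overcounts because ``a given cached packet is fetched once and reused across many transmissions.'' In Construction~\ref{constr-3-arrays}, under the all-distinct demands that define the worst case, each packet $W_{d_k,(\mathcal{T},g)}$ enters exactly one broadcast message, namely the one indexed by the integer $\mathbf{Q}((\mathcal{T},g),k)$; hence the side packets a user must cancel in different transmissions are pairwise disjoint, there is no reuse, and deduplicating ``distinct fetches'' saves nothing. If each side packet were fetched individually, a message involving $t_l+1$ users would cost $\sum_{v=1}^{t_l+1}\bigl[(v-1)\mu_l+(t_l+1-v)\mu_1\bigr]=\binom{t_l+1}{2}(\mu_1+\mu_l)$, which exceeds the per-message cost $t_l(\mu_1+\mu_l)$ implicit in the theorem by a factor of $(t_l+1)/2$, so your route cannot reach the stated formula. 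The idea you are missing (Subsection~\ref{sec: costs} of the paper) is \emph{intra-transmission combining}: for each message, the side packets that user $U_{k_v}$ needs from one cache node are XORed together at that node and retrieved as a single packet, so every involved user makes at most two unit-cost retrievals per message --- the XOR $\bigoplus_{r_u<r_v}W_{d_{U_{k_u}},(\mathcal{T}_u,g)}$ from its first node at cost $\mu_1$ and the XOR $\bigoplus_{r_u>r_v}W_{d_{U_{k_u}},(\mathcal{T}_u,g)}$ from its $l$-th node at cost $\mu_l$, with the two extreme users needing only one --- giving exactly $t_l(\mu_1+\mu_l)$ per message and hence $K^2\gamma_l\frac{1-l\gamma_l}{K\gamma_l+1}(\mu_1+\mu_l)$. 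A secondary caveat: your memory-sharing interpolation for non-integer $K\gamma_l$ only achieves the chord between integer points, which lies above the convex function $K\frac{1-l\gamma_l}{K\gamma_l+1}$, so it justifies the objective only at integer $K\gamma_l$; the paper is equally silent on this point, so it is not a discrepancy with the paper's proof, but it is not actually repaired by your plan either.
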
 


We now present the proposed superposition MACC scheme and outline the derivation of the associated optimization problem. When costs are heterogeneous, i.e., $\boldsymbol{\mu} \neq \mathbf{0}$, and $\rho \neq 1$, MACC schemes operating at different access levels incur distinct tradeoffs between broadcast cost and cache-access cost. Specifically, a higher access level enables users to exploit more cached side information and thereby reduces the required broadcast transmissions, while simultaneously increasing the cache-access cost due to accessing more cache nodes. This observation motivates the use of superposition coding to combine multiple MACC schemes with different access levels in order to minimize the total communication cost.

Specifically, we divide each file into $L$ subfiles, denoted by $W_{n}^{l}$ for $n \in [N]$ and $l \in [L]$. Each subfile has a size ratio $\alpha_{l}\in [0,1]$ equal to the ratio of its length and the length of the file, with the constraints in~\eqref{eq: bound1} and~\eqref{eq: con1}: $0 \leq \alpha_l \leq 1$ and $\sum_{l \in [L]} \alpha_{l} = 1$. For each access level $l \in [L]$, we employ the coding scheme in~\cite{MAC}. Based on integers $(K_{l},t_{l},l)$, we construct a $(K,l,M_{l},N)$ MACC scheme for the subfile set $\mathcal{W}^{l} = \{W_{n}^{l} \mid n \in [N]\}$, where
$K = K_{l} + t_{l}(l-1) $ and $ \frac{K M_{l}}{N} = t_{l}$.
The corresponding caching ratio is denoted by $\gamma_{l} = \tfrac{M_{l}}{N}$, with the constraints~\eqref{eq: bound2}~\eqref{eq: con2}: $\tfrac{1}{K} \leq \gamma_l \leq \tfrac{1}{K}\Big\lfloor\tfrac{K}{l}\Big\rfloor$ and $\sum_{l\in[L]}\alpha_l\gamma_l=\frac{M}{N}$. 
Note that $\alpha_{l}$ also represents the weight of the superposition coding for the $l$-th coding scheme on the $l$-th subfiles of all files, i.e., the $(K,l,M_{l},N)$ MACC scheme.

Next, let us consider the communication cost in the $l$-level scheme and the detailed computations are included in Appendix~\ref{sec: optimization problem establish}. By Lemma~\ref{thm: MACC}, the server broadcasts $\frac{K_l- l t_l}{t_l+1} = K \frac{1 - l\gamma_{l}}{K\gamma_{l} + 1}$ messages with size ratio $\alpha_l$ under the cost $\rho$. So the broadcast cost is 
\begin{align} 
R^l_{b} = \alpha_l \cdot K \frac{1 - l\gamma_{l}}{K\gamma_{l} + 1} \cdot \rho. \label{eq: Rb}
\end{align} The total cache-access cost can be divided into two parts.

{\bf Retrieving requested files}: In the scheme of~\cite{MAC}, the local caching gain is maximized, and thus each user can retrieve exactly $K\gamma_l$ files from its connected cache nodes. The corresponding cost is
    \begin{align} 
        R^l_{c_1} = \alpha_l \cdot K\gamma_l \cdot \sum_{\ell=1}^{l} \mu_{\ell}. \label{eq: Rc1}
    \end{align}

{\bf Retrieving decoding files}: There is an important character that should be mentioned specifically. Each user can only use the $1^{\text{st}}$ closed cache node and the $L^{\text{th}}$ closed cache node to access packets and decode the required packets from the coded packets sent by the server. Each user retrieves $t_l$ decoding files from the first and $l^{\text{th}}$ accessible cache nodes respectively, to decode the requested files from $K \frac{1 - l\gamma_{l}}{K\gamma_{l} + 1}$ broadcast messages. This fact is shown in the second investigation of Example \ref{ex: MACC} in Appendix~\ref{sec: Cons1}. The general proof is included in Appendix \ref{sec: costs}. So the retrieving access cost is 
        \begin{align}  
        R^l_{c_2} = \alpha_l \cdot K\gamma_l \cdot K \frac{1-l\gamma_l}{K\gamma_l+1} \cdot (\mu_1+\mu_{l}). \label{eq: Rc2}
    \end{align}

From~\eqref{eq: Rb}-\eqref{eq: Rc2}, the total cost in Theorem~\ref{thm: com} is obtained. 

\subsection{Structure-aware Algorithm}
\label{sec: algorithm}

Solving the optimization problem described in Theorem~\ref{eq: opt} is challenging mainly due to the following two reasons:

(i) the feasibility region is non-convex due to the bilinear constraint~\eqref{eq: con2}, and 

(ii) the objective function is non-convex.

To address this issue, we propose an enhanced sequential quadratic programming (SQP)~\cite{sqp} algorithm that leverages the properties of the optimal solution. 

\paragraph{Property of the Optimal Solution}

In this paragraph, we will provide Proposition 1, which leads to the main improvement of our proposed algorithm. 

\begin{proposition}[Two-Point Structure of the Optimal Solution]
For any parameters $(K, \boldsymbol{\mu}, \rho, M, N)$ in Theorem~\ref{thm: com}, there exists an optimal solution $(\boldsymbol{\alpha}, \boldsymbol{\gamma})$ and a subset $\mathcal{I} \subseteq [L]$ with $|\mathcal{I}| \leq 2$ such that $\alpha_l = 0$ for all $l \in [L] \setminus \mathcal{I}$. Equivalently, the optimal solution has support on at most two indices.
\hfill $\square$
\end{proposition}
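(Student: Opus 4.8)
The plan is to exploit the fact that, although the full problem in Theorem~\ref{thm: com} is non-convex (owing to the bilinear constraint~\eqref{eq: con2} and the non-convex objective), it collapses to a \emph{linear program} in $\boldsymbol{\alpha}$ once the caching ratios $\boldsymbol{\gamma}$ are held fixed. Concretely, let $(\boldsymbol{\alpha}^\star, \boldsymbol{\gamma}^\star)$ be any optimal solution, and write $c_l$ for the bracketed per-level cost coefficient evaluated at $\gamma_l^\star$, so that the objective reads $R = \sum_{l\in[L]} c_l \alpha_l$. Freezing $\boldsymbol{\gamma} = \boldsymbol{\gamma}^\star$ and treating only $\boldsymbol{\alpha}$ as variable, constraints~\eqref{eq: bound1},~\eqref{eq: con1},~\eqref{eq: con2} reduce to
\begin{align}
\min_{\boldsymbol{\alpha}} \; \sum_{l\in[L]} c_l \alpha_l \quad \text{s.t.} \quad \sum_{l\in[L]} \alpha_l = 1, \;\; \sum_{l\in[L]} \gamma_l^\star \alpha_l = \tfrac{M}{N}, \;\; \boldsymbol{\alpha} \geq \mathbf{0}. \nonumber
\end{align}
Here the upper bounds $\alpha_l \le 1$ are redundant, since $\sum_l \alpha_l = 1$ together with $\alpha_l \ge 0$ already forces $\alpha_l \in [0,1]$. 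This is a standard-form LP whose feasible region is a bounded polytope (a slice of the probability simplex), and it is nonempty because $\boldsymbol{\alpha}^\star$ is feasible for it.

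Next I would invoke the fundamental theorem of linear programming: since the feasible polytope is nonempty and bounded, the minimum is attained at a vertex (basic feasible solution). The two equality constraints are described by the $2 \times L$ matrix
\[
\mathbf{A} = \begin{pmatrix} 1 & 1 & \cdots & 1 \\ \gamma_1^\star & \gamma_2^\star & \cdots & \gamma_L^\star \end{pmatrix},
\]
whose rank is at most $2$. Hence every basic feasible solution has at most $2$ basic (possibly nonzero) variables, the remaining $L-2$ variables being forced to $0$ by active nonnegativity constraints. Let $\tilde{\boldsymbol{\alpha}}$ denote an optimal vertex; then $|\{l : \tilde{\alpha}_l \neq 0\}| \le 2$.

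Finally I would assemble the sparse optimal solution. Because $\tilde{\boldsymbol{\alpha}}$ is optimal for the restricted LP and $\boldsymbol{\alpha}^\star$ is feasible for it, we have $\sum_l c_l \tilde{\alpha}_l \le \sum_l c_l \alpha_l^\star = R^\star$, the optimal value of the full problem; since $(\tilde{\boldsymbol{\alpha}}, \boldsymbol{\gamma}^\star)$ is itself feasible for the full problem, its objective must equal $R^\star$, so it is optimal. Taking $\mathcal{I} = \{l : \tilde{\alpha}_l \neq 0\}$ yields $|\mathcal{I}| \le 2$ with $\tilde{\alpha}_l = 0$ for all $l \notin \mathcal{I}$, as claimed. (For indices with $\tilde{\alpha}_l = 0$ the value of $\gamma_l$ is immaterial and may be left at $\gamma_l^\star$.)

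The only genuine subtlety — and the step I would treat most carefully — is the reduction itself: because the full problem is non-convex, sparsity cannot be read off from a convex vertex argument applied to the joint variable $(\boldsymbol{\alpha},\boldsymbol{\gamma})$. The crux is the observation that the objective and all constraints are \emph{linear in $\boldsymbol{\alpha}$ for fixed $\boldsymbol{\gamma}$}, which is precisely what licenses the basic-feasible-solution argument with exactly two equality constraints and thereby forces at most two nonzero superposition weights.
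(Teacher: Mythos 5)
Your proposal is correct and follows essentially the same route as the paper's own proof: fix $\boldsymbol{\gamma}$ at an optimal $\boldsymbol{\gamma}^\star$, observe that the problem becomes a standard-form linear program in $\boldsymbol{\alpha}$ with exactly two equality constraints, and invoke the fundamental theorem of linear programming to obtain an optimal basic feasible solution with at most two nonzero weights. If anything, your write-up is slightly more complete than the paper's, since you explicitly note the redundancy of the upper bounds $\alpha_l \le 1$ and carry out the optimality-transfer step (that the sparse vertex solution of the restricted LP remains globally optimal for the full non-convex problem), both of which the paper leaves implicit.
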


\begin{proof}
To prove Proposition 1, let us first consider the optimization with respect to the superposition weight $\boldsymbol{\alpha}$. For a fixed caching ratio $\boldsymbol{\gamma}$, the original problem is degenerated into a standard form linear programming problem: 
\begin{align} 
\underset{\boldsymbol{\alpha}}{\text{minimize }} \quad & R(\boldsymbol{\alpha}\mid\boldsymbol{\gamma})  \label{eq: AO1}   \\
\text{subject to } \quad 
&0\leq\alpha_l\leq 1, \forall l \in [L]; \quad  \sum_{l=1}^{L}\alpha_l = 1;  \nonumber \\
& \sum_{l=1}^{L}\alpha_l\gamma_l=\frac{M}{N}. \nonumber  
\end{align}  

The Fundamental Theorem of Linear Programming~\cite{linear_optimization} shows  that for a linear programming problem $\min \{ \mathbf{c}^\top \mathbf{x} \mid \mathbf{A}\mathbf{x} = \mathbf{b}, \mathbf{x} \ge \mathbf{0} \}$, we have the following statements:
\begin{itemize}
    \item If there exists a feasible solution, there exists a basic feasible solution;
    \item if there exists an optimal solution, there exists an optimal basic feasible solution;
    \item the number of non-zero elements in a basic feasible solution should be no more than $m$, where $m$ is the number of linear equality constraints.
\end{itemize}

The optimization problem~\eqref{eq: AO1} is a linear programming problem with $2$ linear equality constraints. Thus, there is an optimal solution of~\eqref{eq: AO1} with at most $2$ non-zero elements in vector $\boldsymbol{\alpha}$. 

Suppose the optimal solution of the optimization problem in Theorem~\ref{thm: com} is $\left( \boldsymbol{\alpha}^{*}, \boldsymbol{\gamma}^{*} \right)$. Since $\boldsymbol{\alpha}^{*}$ is the optimal solution, it must be the optimal solution of the linear programming problem~\eqref{eq: AO1} under the caching ratio $\boldsymbol{\gamma}^{*}$. Then, we conclude that there exists an optimal solution $(\boldsymbol{\alpha},\boldsymbol{\gamma})$ with support on at most two index-wise pairs $(\alpha_i,\gamma_i)$ and $(\alpha_j,\gamma_j)$, for some $i,j \in [L]$.  
\end{proof}

\paragraph{Optimization Algorithm}

Based on the observation in Proposition 1, after choosing a specific pair of non-zero $(\alpha_i, \alpha_j)$, where $i,j \in [L]$, we can rewrite the optimization problem into a simple formulation with two pairs of $(\alpha_i,\gamma_i)$ and $(\alpha_j, \gamma_j)$. We further simplify the objective into the function with two variables $\gamma_i,\gamma_j$ like~\eqref{eq: simple form}, 
\begin{align} \label{eq: simple form}
\underset{\gamma_i,\gamma_j}{\text{minimize }} \quad & R = \frac{(\frac{M}{N} - \gamma_i)\left( 
A_i \gamma_i^2 + B_i \gamma_i  
+  K\rho \right)}{(\gamma_j - \gamma_i)(K\gamma_i+1)} \nonumber\    \\  & \quad + \frac{(\gamma_j - \frac{M}{N})\left(A_j \gamma_j^2 + B_j \gamma_j  
+  K\rho \right)}{(\gamma_j - \gamma_i)(K\gamma_j+1)}  \\
\text{subject to } \quad
& \frac{1}{K}\leq\gamma_i\leq \frac{1}{K}\lfloor\frac{K}{i}\rfloor; \frac{1}{K}\leq\gamma_j\leq \frac{1}{K}\lfloor\frac{K}{j}\rfloor, \nonumber  
\end{align}
where $A_i = K^2 \sum_{\ell=1}^{i} \mu_{\ell} - K^2 i(\mu_1 + \mu_i)$ and $B_i = K \sum_{\ell=1}^{i} \mu_{\ell} - Ki\rho + K^2(\mu_1 + \mu_i)$. Although the feasible region is convex now, the objective function is still non-convex. 
Thus, we employ the SQP algorithm. Since the simplified problem involves only two decision variables $\gamma_i,\gamma_j$ with box constraints, we solve it via an SQP method with a standard globalization strategy (merit-function line search/filter). By restricting the iterates away from the singular set $\gamma_i = \gamma_j$, the objective is continuously differentiable on the feasible set. In addition, the limit point satisfies the second-order sufficient conditions; thus, the SQP method enjoys quadratic convergence with exact Hessian updates. Because the SQP subproblem is two-dimensional, the algorithm has constant per-iteration complexity.

To avoid the prohibitive complexity of exhaustive search over all index pairs, we propose a greedy index search method with restricted neighborhood exploration, whose 
  main intuition is as follows. 
Starting from an initial feasible index pair, the algorithm iteratively improves the solution by locally substituting one index at a time. In each iteration, only a small candidate set of size $B = \frac{L}{2}$ is examined, and a first-improvement strategy is adopted to terminate the neighborhood search as soon as a better solution is found. Moreover, the underlying SQP subproblems are warm-started using the incumbent solution.

The overall algorithm, referred to as Greedy Index Search with Restricted Neighborhood and Warm-Started SQP, is provided in Algorithm~\ref{alg: greedy_search_warm}. The Algorithm complexity is $\mathcal{O}(IBT)$, where $I$ denotes the number of outer iterations in which the incumbent index pair is strictly improved, and $T$ is the number of iterations in the SQP process. 

\begin{algorithm}
\caption{Greedy Index Search with Restricted Neighborhood and Warm-Started SQP}
\label{alg: greedy_search_warm}
\begin{algorithmic}[1]
\Require System parameters $L, K, M, N, \rho, \boldsymbol{\mu}$, neighborhood budget $B = \frac{L}{2}$
\Ensure $(i^\star, j^\star)$, $(\alpha_{i^\star}, \alpha_{j^\star})$, $(\gamma_{i^\star}, \gamma_{j^\star})$, and $R^\star$

\State \textbf{Initialization:}
\State Select an initial feasible index pair $(i^{(0)}, j^{(0)})$ 
\State Solve the SQP subproblem for $(i^{(0)}, j^{(0)})$ to obtain $(\gamma_{i^{(0)}}, \gamma_{j^{(0)}})$ and objective value $R^{(0)}$
\State Set $(i^\star, j^\star)\gets (i^{(0)}, j^{(0)})$, $R^\star \gets R^{(0)}$, and $(\gamma_{i^\star}, \gamma_{j^\star}) \gets (\gamma_{i^{(0)}}, \gamma_{j^{(0)}})$

\Repeat
    \State \textbf{Restricted candidate set construction:}
    \State Build $\mathcal{K}\subseteq \{1,\dots,L\}\setminus\{i^\star,j^\star\}$ with $|\mathcal{K}|=B$
    \State Set \texttt{improved} $\gets$ \textbf{false}

    \For{each $k\in\mathcal{K}$}
        \For{each candidate pair $(p,q)\in\{(k,j^\star),(i^\star,k)\}$}

            \State \textbf{Warm-start: }
            \State Set initial point $(\gamma_p^{(0)},\gamma_q^{(0)}) \gets (\gamma_{i^\star},\gamma_{j^\star})$
            \State Solve the SQP subproblem for $(p,q)$ initialized at $(\gamma_p^{(0)},\gamma_q^{(0)})$ to obtain $(\gamma_p,\gamma_q)$ and $R_{p,q}$

            \If{$R_{p,q} < R^\star$}
                \State $(i^\star, j^\star) \gets (p,q)$, $R^\star \gets R_{p,q}$
                \State $(\gamma_{i^\star}, \gamma_{j^\star}) \gets (\gamma_p,\gamma_q)$
                \State \texttt{improved} $\gets$ \textbf{true}
                \State \textbf{break} 
            \EndIf
        \EndFor
        \If{\texttt{improved}}
            \State \textbf{break}
        \EndIf
    \EndFor
\Until{\texttt{improved}=\textbf{false}}

\Statex
\State \textbf{Post-processing:}
\State Compute $(\alpha_{i^\star}, \alpha_{j^\star})$ from $(\gamma_{i^\star}, \gamma_{j^\star})$
\State \Return $(i^\star, j^\star)$, $(\alpha_{i^\star}, \alpha_{j^\star})$, $(\gamma_{i^\star}, \gamma_{j^\star})$, $R^\star$
\end{algorithmic}
\end{algorithm}

\subsection{Performance Analysis}
\label{sec: performance}

\begin{figure}[htbp]
    \vspace{-4pt}
    \centering
    \includegraphics[width=0.7\linewidth]{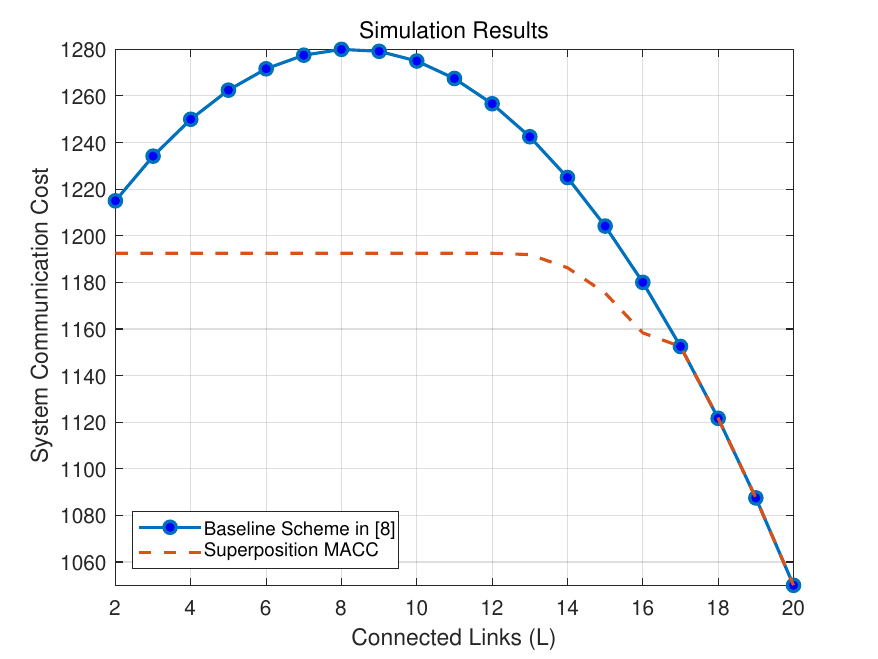}
    \caption{\small Performance comparison between the baseline Scheme and the proposed superposition scheme.}  
    \vspace{-6pt}
    \label{fig:performance_comparison}
\end{figure}

In the simulations, we consider the $(K=100, N=100, M=5, \boldsymbol{\mu} = (1,2,\ldots,L), \rho = 65)$ MACC systems, where each user is connected to $L$ cache nodes with  $L \in [2:20]$.   A baseline scheme is deploying the $(K,L,M,N)$ MACC scheme in~\cite{MAC} directly into the considered $(K,\boldsymbol{\mu},M,N,\rho)$ MACC model, then the communication cost can be written as 
$
    R =  K\frac{1-L\frac{M}{N}}{K\frac{M}{N}+1}\rho + K\frac{M}{N} \sum_{\ell=1}^{L} \mu_{\ell} + K^2\frac{M(1-l\frac{M}{N})}{KM+N}(\mu_1+\mu_{L}),
$
which can also be viewed as the objective function with parameters $\alpha_L = 1$ and $\gamma_L = \frac{M}{N}$. 

As shown in Fig.~\ref{fig:performance_comparison}, the baseline scheme exhibits a non-monotonic cost profile: the communication cost increases with $L$, peaking around $L=8$, and then decreases as $L$ grows further. This indicates that the baseline scheme suffers from substantial overhead when the access level is small to moderate. In contrast, the proposed superposition MACC scheme achieves significantly lower and more stable costs when $L < 16$. In particular, its communication cost remains nearly flat up to $L=13$, thereby eliminating the initial degradation observed in the baseline scheme. For $L > 16$, the cost of the proposed MACC scheme converges to that of the baseline scheme.

\section{Conclusion}
\label{sec: conc}


In this paper, we studied the MACC model with heterogeneous retrieval costs from an optimization perspective, accounting for both broadcast and cache access costs. To address the tradeoff between caching gains and retrieval costs, we proposed a novel coded caching framework based on superposition coding, where the MACC schemes of Cheng et al.~\cite{MAC} are layered across multiple access levels. The proposed scheme was shown to be more cost-efficient than the scheme in~\cite{MAC} in cost-aware MACC systems, and we proved that the optimal solution always exhibits a sparse structure with at most two non-zero components. Ongoing work includes the extensions to the two-dimensional MACC model and the combinatorial MACC model with heterogeneous retrieval costs.

\appendices

\section{Example for Construction \ref{constr-3-arrays}}
\label{sec: Cons1}

Now let us use the following example to illustrate Construction \ref{constr-3-arrays}.
\begin{example}
\label{ex: MACC}    
Consider the parameters $K'=4$, $t=2$, and $L=3$. From \eqref{eq-cach-node-C}, \eqref{eq-cach-user-U} and \eqref{eq-delivery-user-Q}, we are able to construct the node-placement array $\mathbf{C}$, user-retrieve array $\mathbf{U}$ and user-delivery array $\mathbf{Q}$ for a $(K=8,L=3,M=2,N=8)$ MACC system. When $(\mathcal{T},g)=(\{1,2\},1)$, from \eqref{eq-row-cache} and \eqref{eq-row-user} we have
 \begin{align*} 
\mathcal{C}_{\{1,2\},1}&= \{ \langle \mathcal{T}[h] + 2h \rangle_{8} : h \in [t]\} =\{3,6\},\\
 \mathcal{U}_{\{1,2\},1}&= \left\{ \langle \mathcal{T}[h] + 2(h-1)  \rangle_{8} ,  \right. \langle \mathcal{T}[h] + 2(h-1) + 1\rangle_{8},  \nonumber\\
&\ \ \ \  \left.  \ldots, \langle \mathcal{T}[h] + 2h \rangle_{8}: h \in [t]\right\} \\
&=\{1,2,3,4,5,6\}. 
\end{align*}
Now let us consider the mapping $\psi_{\mathcal{T},g}(\cdot)$. We have $[8]\setminus\mathcal{U}_{\{1,2\},1} = \{7,8\}$ and $[4] \setminus \{1,2\} = \{3,4\}$. For $k=7$ which is the first entry of $\{7,8\}$, we obtain $\psi_{\{1,2\},1}(7) = \{3,4\}[1] = \{3,4\}[1] = 3$. From \eqref{eq-delivery-user-Q}  we have $ \{1,2\} \cup \psi_{\{1,2\},1}(7) = \{1,2,3\}$ which means the entry in the location $\left((\{1,2\},1),7\right)$ is $(\{1,2,3\},1)$. Similarly, we have all the vectors concurring in $\mathbf{Q}$ are 
\begin{align}
\label{eq-vectors}
(\{1,2,3\},g), (\{1,2,4\},g),  (\{1,3,4\},g), (\{2,3,4\},g)
\end{align}where $g\in[8]$. To simplify the vectors in \eqref{eq-vectors}, we use the integers in $[32]$. For instance, $(\{1,2,3\},1)$, $(\{1,2,4\},1)$, $(\{1,3,4\},1)$, and $(\{2,3,4\},1)$ can be represented $1,2,3,4$.

\begin{figure*}[!t]
\centering
\centerline{\includegraphics[scale=0.48]{./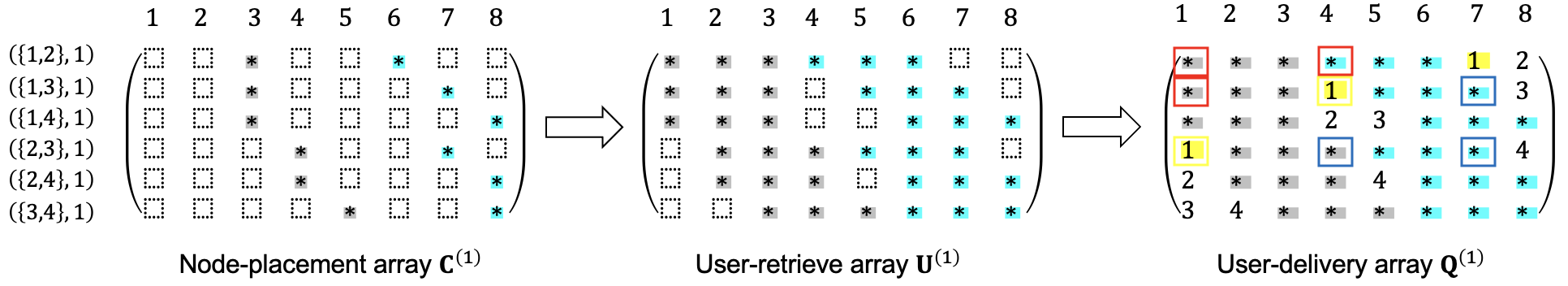}}
\caption{An example of $\mathbf{C}^{(1)}, \mathbf{U}^{(1)}, \mathbf{Q}^{(1)}$ under system parameters $(8,3,2,8)$.}
\label{fig: MACC construction}  
\end{figure*} 
For the limitation of pages, here we just list the subarrays $\mathbf{C}^{(1)}$, $\mathbf{U}^{(1)}$, and $\mathbf{Q}^{(1)}$ generated by the row labels $(\mathcal{T},1)$ where $\mathcal{T}\in {[4]\choose 2}$ and the column labels $[8]$ in Fig.~\ref{fig: MACC construction}. We can see that $\mathbf{Q}^{(1)}$ satisfies the conditions C$1$ and C$2$ in Lemma \ref{lem-sufficient-Q}. For instance, when $s=1$, we have the following array
\begin{align*}
 \left(\begin{array}{ccc}
  *&*&1 \\
  *&1&*\\
  1&*&*
 \end{array}  
 \right).
\end{align*}
The following investigations can be obtained. 
\begin{itemize}
    \item The subarrays $\mathbf{C}^{(g)}$, $\mathbf{U}^{(g)}$, and $\mathbf{Q}^{(g)}$ generated by the row labels $(\mathcal{T},g)$ where $\mathcal{T}\in {[4]\choose 2}$, $g\in[8]$ and the column labels $[K]$ can be obtained by cyclically right-shifting $\mathbf{C}^{(1)}$, $\mathbf{U}^{(1)}$, and $\mathbf{Q}^{(1)}$ by $g-1$ positions, respectively. That is, 
\begin{align*}
\mathbf{C}&=\left[
              \mathbf{C}^{(1)};
              \mathbf{C}^{(2)};
              \cdots;
              \mathbf{C}^{(8)}
            \right],\\
\mathbf{U}&=\left[
              \mathbf{U}^{(1)};
              \mathbf{U}^{(2)};
              \cdots;
              \mathbf{U}^{(8)}
            \right],\\
\mathbf{Q}&=\left[
              \mathbf{Q}^{(1)};
              \mathbf{Q}^{(2)};
              \cdots;
              \mathbf{Q}^{(8)}
            \right].   
\end{align*} Here, $\mathbf{Q}$ satisfies the conditions C$1$ and C$2$ in Lemma \ref{lem-sufficient-Q}.
\item Each user can only use the $1^{\text{th}}$ closed cache node and the $L^{\text{th}}$ closed cache node to access some packets to decode its required packets from the coded packets sent by the server. For instance, at the time slot $s=1$, the server send the coded packet $$X_1= W_{d_1,(\{1,2\},1)}\oplus W_{d_4,(\{1,3\},1)}\oplus W_{d_7,(\{2,3\},1)}.$$ Then in order to decode its required packet $W_{d_1,(\{1,2\},1)}$, the user $U_1$ should get the packets $W_{d_4,(\{1,3\},1)}$ (red block) and $W_{d_7,(\{2,3\},1)}$ (red block) from the $3^{\text{th}}$ closed cache node $C_3$ respectively; similarly to decode its required packet $W_{d_4,(\{1,3\},1)}$, the user $U_4$ should get the packets $W_{d_1,(\{1,2\},1)}$ (blue block) and $ W_{d_7,(\{2,3\},1)}$ (red block) from the $1^{\text{th}}$ closed cache node $C_4$ and the $3^{\text{th}}$ closed cache node $C_5$ respectively; 
to decode its required packet $W_{d_7,(\{2,3\},1)}$, the user $U_7$ should get the packets $W_{d_1,(\{1,2\},1)}$ (blue block) and $W_{d_4,(\{1,3\},1)}$ (blue block) from the $1^{\text{th}}$ closed cache node $C_4$ respectively. In fact, for the general case, this investigation always holds. For the detailed proof, please see Subsection \ref{sec: costs}.
\end{itemize}
\hfill $\square$
\end{example}

\section{MACC Scheme with Costs}
\label{sec: optimization problem establish}

This section examines the $(K,\boldsymbol{\mu},\rho, M, N)$ MACC model and presents the formulation of communication cost. Furthermore, an optimization problem is formulated based on superposition coding. 

\subsection{Communication Cost Formulation}
\label{sec: costs}

We formulate the communication cost of the $(K,\boldsymbol{\mu},\rho,M,N)$ MACC system by applying Construction~\ref{constr-3-arrays}, under system parameters $(K',t,L)$. The relationships of these parameters are $K = K' + t(L-1)$, $t = \frac{KM}{N}$, $\gamma = \frac{M}{N}$ and $L = |\boldsymbol{\mu}|$.  

To determine the cost formulation, the total cost is divided into two kinds of costs based on the file retrieval procedure. {\bf server broadcast cost}, the cost of the server broadcasting the broadcast messages. {\bf Cache-access cost}, consisting of two parts: the cost of users retrieving the requested files directly from cache nodes, and the cost of users retrieving the decoding files for the broadcast messages. 

$\bullet$ {\bf Server broadcast cost:} We consider the broadcast cost based on the user-delivery array $\mathbf{Q}$~\eqref{eq-delivery-user-Q}. Recall that for any $(\mathcal{T},g),$ $\mathcal{T} \in \binom{[K']}{t}, g,k \in [K]$, we have   
\begin{align}
    \mathbf{Q}((\mathcal{T},g),k)&= 
    \begin{cases}
        * & \text{if}\  k \in \mathcal{U}_{\mathcal{T},g} \\
        (\mathcal{T} \cup \{\psi_{\mathcal{T},g}(k)\},g) & \text{otherwise},
    \end{cases}  
\end{align}
where \begin{align*} \mathcal{U}_{\mathcal{T},g}&= \left\{ \langle \mathcal{T}[h] + (h-1)(L-1) + (g-1)\rangle_{K},  \right. \nonumber\\  & \ \ \ \ \ \langle \mathcal{T}[h] + (h-1)(L-1) + (g-1) + 1\rangle_{K},  \nonumber\\
&\ \ \ \  \left.  \ldots, \langle\mathcal{T}[h] + h(L-1) +(g-1)\rangle_{K}: h \in [t]\right\}.
\end{align*}
Considering the first round transmission, the server broadcasts $X_{(\mathcal{T}',g)}$ which is a XOR of all $(\mathcal{T}\cup\{\psi_{\mathcal{T},g}(k)\},g) = (\mathcal{T}',g)$ entries to all users
\begin{equation}
X_{(\mathcal{T}',g)} = \bigoplus_{\mathbf{Q}((\mathcal{T},1),k)=(\mathcal{T}',g)} W_{d_{U_k},(\mathcal{T},g)},
\end{equation}
where $\mathcal{T}' \in \binom{[K']}{t+1}$. The broadcast cost depends on the number of broadcast messages and the size of each broadcast message, with a per-file cost, $\rho$. We obtain that the number of transmission messages is $S = K \times \binom{K'}{t+1}$, and the size ratio of the broadcast message is equal to the file packet is $\frac{1}{K\binom{K'}{t}}$. Thus, the broadcast cost can be written as
\begin{align} \label{eq: R_b ge}
     R_b = S \frac{1}{K\binom{K'}{t}} \rho =  \frac{K\binom{K'}{t+1}}{K\binom{K'}{t}} \rho = K\frac{1-L\gamma}{1+K\gamma} \rho. 
\end{align}

$\bullet$ {\bf Cache-access cost (i):} 
In this part, we analyze the cost of users retrieving the requested packets directly from cache nodes. 
Each user $U_k$ retrieves the packets $W_{d_{U_k},(\mathcal{T},g)}$ corresponding to the entries where $\mathbf{Q}((\mathcal{T},g),k)=*$. 
The access cost depends on the relative position of $k$ inside the set $\mathcal{U}_{\mathcal{T},g}$.

Specifically, $\mathcal{U}_{\mathcal{T},g}$ can be decomposed into $t$ disjoint intervals:
\begin{align}
\mathcal{U}_{\mathcal{T},g}
&=\bigcup_{h=1}^{t} I_{\mathcal{T},g,h}, \nonumber  \\ 
I_{\mathcal{T},g,h} &\triangleq 
\big\{\langle \mathcal{T}[h]+ (h-1)(L-1)+ (g-1)+ r - 1\rangle_{K}: \nonumber  \\
& \ \ \ \ \ r\in[L]\big\},
\label{eq: interval-def}
\end{align}
where each interval $I_{\mathcal{T},g,h}$ corresponds to one specific $h\in[t]$ and contains $L$ consecutive user indices. If user $k$ belongs to the interval $I_{\mathcal{T},g,h}$, i.e., 
$$
k = \langle \mathcal{T}[h]+(h-1)(L-1)+(g-1)+r-1\rangle_{K},
$$
for some $r\in[L]$. Then, the cache node is the $(L-r+1)^{\text{th}}$ closed cache node for user $U_k$ and the corresponding access cost for this packet equals $\mu_{L-r+1}$.

Now, let's consider the total cost incurred by user $U_k$. As $g$ cycles over $[K]$, the $t$ intervals in $\mathcal{U}_{\mathcal{T},g}$ shift cyclically so that, for each $h\in[t]$, user $k$ occupies every position $r \in\{1,\ldots,L\}$ exactly once. Therefore, for every fixed $\mathcal{T}$, user $k$ incurs the position-dependent costs $\{\mu_{L-r+1}\}_{r=1}^{L}$ each exactly once per interval, yielding a per-$\mathcal{T}$ total $\sum_{r=1}^{L}\mu_{L-r+1} = \sum_{\ell=1}^{L}\mu_{\ell}$. Since there are $t$ intervals, the per-$\mathcal{T}$ cost becomes $t\sum_{\ell=1}^{L}\mu_{\ell}$. 
Summing over all $\binom{K'}{t}$ choices of $\mathcal{T}$, the total (unnormalized) cost for user $k$ equals
\begin{align}
R_{c_1,k} = \binom{K'}{t}\,t \sum_{\ell=1}^{L}\mu_{\ell},
\label{eq: per-user-u2c-raw}
\end{align}
which is identical for all users $k\in[K]$. W
We then obtain the normalized access cost as
\begin{align} \label{eq: R_c1 ge}
    R_{c_1} = K \times R_{c_1,k} \times \frac{1}{K\binom{K'}{t}} 
    = t \sum_{\ell=1}^{L}\mu_{\ell} 
    = K \sum_{\ell=1}^{L}\mu_{\ell}\gamma.
\end{align}

$\bullet$ {\bf Cache-access cost (ii):} Finally, we analyze the cache-access cost incurred when users retrieve file packets required for decoding the broadcast messages. Consider a broadcast message associated with entry $(\mathcal{T}',g)$, where $\mathcal{T}'\in\binom{[K']}{t+1}$. Firstly, let us find out the retrieved file packets. In the user-delivery array $\mathbf{Q}$, there exist $t+1$ entries with the same value $(\mathcal{T}',g)$, located at positions ${((\mathcal{T}_u,g),k_u)}_{u=1}^{t+1}$, where $\mathcal{T}_u\in\binom{\mathcal{T}'}{t}$, $k_u\in[K]$, and $\mathcal{T}_u\cup{\psi_{\mathcal{T}_u,g}(k_u)}=\mathcal{T}'$. By condition C1, all pairs $(\mathcal{T}_u,g)$ and $k_u$ are distinct, so the subarray $\mathbf{Q}_{(\mathcal{T}',g)}$ formed by rows $\{(\mathcal{T}_1,g),(\mathcal{T}_2,g),\dots,(\mathcal{T}_{t+1},g)\}$ and columns $\{k_1,\dots,k_{t+1}\}$ is a $t+1 \times t+1$ matrix. By repeatedly applying condition C2 to any $u \neq v$, we have $\mathbf{Q}((\mathcal{T}_u,g),k_v)=*$. Hence, up to simultaneous row/column permutations, $\mathbf{Q}_{(\mathcal{T}',g)}$ takes the canonical form

Accordingly, user $U_{k_v}$, where $v \in [t+1]$, receives the broadcast message 
\begin{align*}
X_{(\mathcal{T}',g)} = \bigoplus_{\mathbf{Q}((\mathcal{T}_u,g),k_u)=(\mathcal{T}',g)} W_{d_{U_{k_u}},(\mathcal{T}_u,g)},
\end{align*}
and must retrieve the file packets $W_{d_{U_{k_u}},{(\mathcal{T}_u,g)}}: u \neq v$ from the corresponding cache nodes to decode its desired packet $W_{d_{U_{k_v}},(\mathcal{T}_v,g)}$.

Next, we will determine the cache-access cost for retrieving these packets. Recall that $\psi_{\mathcal{T},g}(\cdot)$ is a one-to-one mapping function which maps the user index $k$ into the $\langle h + (g-1) \rangle_{K'}$-th element of $[K'] \setminus \mathcal{T}$, where $k$ is the $h^{\text{th}}$ entry of $[K] \setminus \mathcal{U}_{\mathcal{T},g}$. Based on the definition of $\psi_{\mathcal{T},g}(\cdot)$ and definition of $\mathcal{U}_{\mathcal{T},g}$ in~\eqref{eq-row-user}, we can define the inverse mapping function $\psi^{-1}_{\mathcal{T},g}(\cdot)$ which maps the $\langle h + g-1 \rangle_{K'}$-th entry of $[K'] \setminus \mathcal{T}$ into $[K] \setminus \mathcal{U}_{\mathcal{T},g}$. Suppose $r$ is the $n^{\text{th}}$ smallest entry of $\mathcal{T} \cup \{r\}$, there are $n-1$ elements in $\mathcal{T}$ smaller than $r$, then we have 
\begin{align}
    \psi^{-1}_{\mathcal{T},g}(r) = \langle (n - 1)(L-1) +  r + g - 1  \rangle_{K}.
\end{align}
Let $\psi_{\mathcal{T}_v,g}(k_v) = r_v$ denote the $n_v^{\text{th}}$ smallest element of $\mathcal{T}'$, for all $v \in [t+1]$. We have 
\begin{align}
        & k_v = \psi^{-1}_{\mathcal{T}_v,g}(r_v) = \langle (n_v-1)(L-1) +  r_v + g - 1  \rangle_{K}, \\
     \text{and}~&\mathcal{T}_1 \cup r_1 = \mathcal{T}_2 \cup r_2 = \ldots  = \mathcal{T}_{t+1} \cup r_{t+1} = \mathcal{T}'.
\end{align}
Then we find out that $r_v \neq r_u$ for any $v \neq u$ and $\{r_1,r_2,\ldots,r_{t+1}\} = \mathcal{T}'$. Now, we will analyze the cost of user $U_{k_v}$. 
\begin{itemize}
    \item[\text{(a)}] \text{Case $r_u>r_v$:} Here $r_v$ is the $n_v^{\text{th}}$ smallest element in  $\mathcal{T}'$ and also the $n_v^{\text{th}}$ smallest element in $\mathcal{T}_u$. We want to find the cache node that caches the file packet $W_{d_{U_{k_v}}, (\mathcal{T}_u,g)}$. Based on the definition of $\mathbf{C}_{\mathcal{T},g}$ in~\eqref{eq-row-cache} and $\mathcal{U}_{\mathcal{T},g}$ in~\eqref{eq-row-user}, we find that 
    \begin{align}
        k_v &= \langle (n_v-1)(L-1) +  r_v + g - 1  \rangle_{K} \\
        &= \langle r_v + n_v(L-1) +   g - 1 - (L-1)  \rangle_{K}.
    \end{align}
    There are $L-1$ distance from the $n_v^{\text{th}}$ element of $\mathbf{C}_{\mathcal{T}_u,g}$, which means file packet $W_{d_{U_{k_v}},(\mathcal{T}_u,g)}$ is cached in the cache node $C_{\langle k_v + (L-1) \rangle_{K}}$, and the user $U_{k_v}$ can retrieve it with cost $\mu_L$.
 
     \item[\text{(b)}] \text{Case $r_u<r_v$:} Here $r_v$ is the $n_v^{\text{th}}$ smallest element in  $\mathcal{T}'$ and the $(n_v -1)^{\text{th}}$ smallest element in $\mathcal{T}_u$. We want to find the cache node that caches the file packet $W_{d_{U_{k_v}}, (\mathcal{T}_u,g)}$. With the same process, we find that 
    \begin{align}
        k_v &= \langle  r_v + (n_v-1)(L-1) +  g - 1  \rangle_{K}, 
    \end{align}
    which is exactly the $(n_v-1)^{\text{th}}$ element of $\mathbf{C}_{\mathcal{T}_u,g}$, which means file packet $W_{d_{U_{k_v}},(\mathcal{T}_u,g)}$ is cached in the cache node $C_{k_v}$, and the user $U_{k_v}$ can retrieve it with cost $\mu_1$.
\end{itemize}

Then we have the following conclusion: For any $\mathcal{T}' \in \binom{[K']}{t+1}$, user $U_{k_v}$, $v\in[t+1]$ where $\psi_{\mathcal{T}_v, g}(k_v) = r_v$, can decode the file packet $W_{d_{U_{k_v}},(\mathcal{T}_v, g)}$ from the broadcast coding signal 
$$X_{(\mathcal{T}',g)} = \bigoplus_{\mathbf{Q}((\mathcal{T}_u,g),{k_u})=(\mathcal{T}',g)} W_{d_{U_{k_u}},(\mathcal{T}_u, g)},$$
by retrieving two XOR of side file packets:
$$ \bigoplus_{r_u > r_v} W_{d_{U_{k_u}},(\mathcal{T}_u, g)} ~\text{and} \bigoplus_{r_u < r_v} W_{d_{U_{k_u}},(\mathcal{T}_u, g)}$$
from cache nodes $C_{\langle k_v+L-1 \rangle_K}$ and $C_{k_v}$, respectively, with cache-access costs $\mu_L$ and $\mu_1$. 

Specifically, suppose $r_1>r_2>\ldots>r_{t+1}$, we will obtain that user $U_{k_1}$ retrieves $\bigoplus_{u \in [2:t+1]} W_{d_{U_{k_u}},(\mathcal{T}_u, g)}$ at cost $\frac{\mu_1}{K\binom{K'}{t}}$. User $U_{k_v}$, where $v \in [2:t]$, retrieves $\bigoplus_{u \in [v-1]} W_{d_{U_{k_u}},(\mathcal{T}_u, g)}$ and $\bigoplus_{u \in [v+1:t+1]} W_{d_{U_{k_u}},(\mathcal{T}_u, g)}$ at cost $\frac{\mu_1 + \mu_L}{K\binom{K'}{t}}$. User $U_{k_{t+1}}$ retrieves $\bigoplus_{u \in [t]} W_{d_{U_{k_u}},(\mathcal{T}_u, g)}$ at cost $\frac{\mu_L}{K\binom{K'}{t}}$.

The cost for users to access cache nodes in order to decode the requested file packets from $S = K\binom{K'}{t+1}$ broadcast messages is given by
\begin{align} \label{eq: R_c2 ge}
R_{c_2} &= \frac{St(\mu_1+\mu_L)}{K\binom{K'}{t}} \nonumber \\ &= \frac{K'-1}{t+1}t(\mu_1+\mu_L) \nonumber \\ &= K^2\gamma\frac{1-L\gamma}{K\gamma+1}(\mu_1+\mu_L).
\end{align}

Based on equations~\eqref{eq: R_b ge},~\eqref{eq: R_c1 ge} and~\eqref{eq: R_c2 ge}, the total system cost of the MACC scheme obtained with system parameters $(K,L,M,N,\boldsymbol{\mu}=(\mu_1,\ldots,\mu_L),\rho)$ can be calculated as
\begin{align} \label{eq: R ge}
R &= R_b + R_{c_1} + R_{c_2} \nonumber \\ &= K\frac{1-L\gamma}{K\gamma+1}\rho +   K\gamma \sum_{l=1}^{L} \mu_l +  K^2\gamma\frac{1-L\gamma}{K\gamma+1}(\mu_1+\mu_{L}).
\end{align}

\bibliographystyle{IEEEtran}
\bibliography{references_d2d}


\end{document}